\g@addto@macro\normalsize{%
	\setlength\abovedisplayskip{13pt}
	\setlength\belowdisplayskip{13pt}
	\setlength\abovedisplayshortskip{13pt}
	\setlength\belowdisplayshortskip{13pt}
}
\newtheorem{theorem}{Theorem}[section]
\title{Bayesian inference, model selection and  likelihood estimation using fast rejection sampling: the Conway-Maxwell-Poisson distribution}
\author{Alan Benson\textsuperscript{1,2,$\star$} and Nial Friel\textsuperscript{1,2}}
\renewcommand\@date{{%
		\vspace{-\baselineskip}%
		\large\centering
		\bigskip
		\textsuperscript{1}School of Mathematics and Statistics, University College Dublin\par
		\vspace{0.5ex}
		\textsuperscript{2}Insight Centre for Data Analytics\par
		\vspace{0.5ex}
		\textsuperscript{$\star$}alan.benson@insight-centre.org
		
		\bigskip
		\bigskip
		
		\today
}}
\begin{document}
\maketitle

\begin{abstract}
Bayesian inference for models with intractable likelihood functions represents a challenging suite of problems in modern statistics. In this work we analyse the 
Conway-Maxwell-Poisson (COM-Poisson) distribution, a two parameter generalisation of the Poisson distribution. COM-Poisson regression modelling allows the 
flexibility to model dispersed count data as part of a generalised linear model (GLM) with a COM-Poisson response, where exogenous covariates control the mean and 
dispersion level of the response. The major difficulty with COM-Poisson regression is that the likelihood function contains multiple intractable normalising 
constants and is not amenable to standard inference and MCMC techniques. Recent work by \cite{chanialidis2017} has seen the development of a sampler to draw 
random variates from the COM-Poisson likelihood using a rejection sampling algorithm. We provide a new rejection sampler for the COM-Poisson distribution which significantly reduces the CPU time required to perform inference for COM-Poisson regression models. An extension of this work shows that for any intractable likelihood function with an associated rejection sampler it is possible to construct unbiased estimators of the intractable likelihood which proves useful for model selection or for use within pseudo-marginal MCMC algorithms \citep{andrieu2009}. We demonstrate all of these methods on a real-world dataset of takeover bids.
\end{abstract}

\section{Introduction}
Modelling count data continues to be a important area in the practice of statistics. 
Often covariate information is available which may prove useful in explaining the counts, for example, time of day influencing the number calls to a particular call centre or amount borrowed influencing the number of loan defaults by a consumer. Poisson regression is a standard framework for modelling covariate-dependent count data allowing the mean response to depend on the covariates through a log-linear link function. In practice, Poisson regression fails to capture the phenomenon of dispersion, where the mean and variance of the response differ significantly. Dispersion measures the spread of a random variable and is quantified through the ratio of the variance of a random variable to its mean. Count data often exhibits underdispersion (variance less than the mean), overdispersion (variance greater than the mean) or equidispersion (equality of mean and variance) but Poisson regression is only capable of modelling equidispersed data. Many alternative approaches have been suggested to capture dispersion, such as using quasilikelihood or the popular negative binomial regression \citep{hilbe2011} which allows modelling of overdispersed data only. The COM-Poisson distribution \citep{conway1962} was first introduced to model queuing systems and was then revived as a statistical model for dispersed count data by \cite{shmueli2005}. This revival lead to the introduction of COM-Poisson regression models by \cite{guikema2008} as a flexible regression model for dispersed count data with covariates. However, this flexibility comes with the caveat that the COM-Poisson likelihood function contains an intractable normalising term which requires the need for non-standard methods to conduct parameter inference. This intractability has restricted the popularity of COM-Poisson regression as a viable solution to modelling dispersion. Recent work by \cite{chanialidis2017} has seen a Bayesian analysis of COM-Poisson regression under the framework of doubly-intractable Bayesian inference, where the authors overcome the intractability through the use of a rejection sampling algorithm for the COM-Poisson distribution. This rejection sampler is then used within the exchange algorithm of \citet{murray2006} to perform Bayesian parameter inference.

Our novel contribution in this paper is to provide a more efficient approach to that of \cite{chanialidis2017} by designing a simpler rejection sampler for the 
COM-Poisson distribution which is computationally less intensive to sample from and leads to faster inference within COM-Poisson regression models. In one example 
in this paper, we observe a greater than threefold reduction in CPU time by using our rejection sampler. The rejection sampler of \citet{chanialidis2017} is 
derived from a discrete version of adaptive rejection sampling \citep{gilks1992} and requires the costly construction of a piecewise truncated geometric enveloping 
distribution to tightly encapsulate the COM-Poisson probability density function. Our rejection sampler requires only a single envelope distribution depending on 
the dispersion parameter of the COM-Poisson distribution. This single envelope distribution does come with a higher rejection rate but completely bypasses the 
costly setup and sampling cost involved in piecewise truncated geometric envelope distributions. A further and important novelty of our work is to show that when a rejection 
sampling algorithm is available to sample from an intractable likelihood function of interest, then it is possible to construct an unbiased estimate of this 
intractable likelihood function using the rejection sampler. Our idea works by exploiting a link between rejection sampling efficiency and its relationship 
with the reciprocal intractable normalising constant of the likelihood function.

The remainder of this paper is organised as follows: Section \ref{sec:glms} reviews the COM-Poisson distribution and COM-Poisson regression models under the framework of doubly-intractable Bayesian inference.
Section \ref{sec:rejsampling} reviews rejection sampling for intractable likelihoods and our new COM-Poisson rejection sampler is introduced in Section \ref{sec:cmpsampler}. Section \ref{sec:unbiasedlike} describes the novel method of constructing unbiased estimators of intractable likelihood functions through rejection sampling. Section \ref{sec:pseudomarginal} shows how pseudo-marginal MCMC can be performed using this unbiased intractable likelihood estimator. Section \ref{sec:results} shows results of these methods applied firstly to a dataset with no covariates in Section \ref{sec:inventory} and then to a real-world COM-Poisson regression dataset of company takeover bids in Section \ref{sec:takeover}. Finally, Section \ref{sec:discussion} gives discussion and some thoughts for future work.

\section{Bayesian doubly-intractable inference for GLMs}
\label{sec:glms}
The COM-Poisson distribution \citep{conway1962} is a two-parameter discrete probability distribution that allows the flexibility to model count data with 
over-, under- and equi-dispersion. The probability mass function for a COM-Poisson random variable $Y$ with parameters $\mu > 0$ and $\nu \geq 0$ is defined over the 
non-negative integers as
\begin{equation*}
f(y \vert \mu, \nu) = \left( \frac{\mu^y}{y!} \right)^\nu \frac{1}{\mathcal{Z}_f(\mu, \nu)}.
\end{equation*}
The unnormalised component of the mass function is denoted $q_f(y \vert \mu, \nu) = \left( \frac{\mu^y}{y!} \right)^\nu$ and 
$\mathcal{Z}_f(\mu, \nu) = \sum_{y=0}^{\infty} q_f(y \vert \mu, \nu)$ is an intractable normalising constant, having no closed form representation for this model. 
The mode of the COM-Poisson distribution is $\lfloor \mu \rfloor$, with two consecutive modes at $\mu$ and $\mu-1$ in the case where $\mu$ is an integer. The moments 
of the COM-Poisson distribution are unavailable directly due to the intractable normalising constant. The moments can however be approximated through the use of an 
asymptotic representation of $\mathcal{Z}_f(\mu, \nu)$ derived by \citet{shmueli2005}. The approximations for the mean and variance using this asymptotic representation 
are, respectively, 
\begin{equation}
\mathrm{E}(Y) \approx \mu + \frac{1}{2\nu} - \frac{1}{2}, \quad \mathrm{Var}(Y) \approx \frac{\mu}{\nu}.
\label{eq:meanvarapprox}
\end{equation}
These approximations are quite accurate for a wide range of $\mu$ and $\nu$, except for small $\mu$ or small $\nu$ for the case of $\mathrm{E}(Y)$ and for $\nu>1$, regardless of the value of $\mu$ for the case of $\mathrm{Var}(Y)$.
The purpose of the parameter $\nu$ is to control dispersion through this parameter's inverse relationship with the variance as seen in \eqref{eq:meanvarapprox}. When $\nu = 1$ the distribution exhibits equidispersion and the probability mass function reduces to that of a Poisson random variable with $\mathcal{Z}_f(\mu, 1) = \exp(\mu)$. Alternatively setting $\nu < 1$ gives overdispersion and setting $\nu>1$ gives underdispersion.

The main application of the COM-Poisson distribution is to GLMs, referred to as COM-Poisson regression models, where $n$ responses $y_{1:n} = (y_1, \dots y_n)$ are 
assumed to follow a COM-Poisson distribution with the parameters $\mu$ and $\nu$ of each response being conditional on $b$ available covariates 
$x_i = \left(x_{i1}, \dots, x_{ib} \right)$ for $i=1,\dots n$. Due to this conditioning we now index the parameters for observation $y_i$ as $\mu_i$ and $\nu_i$. 
COM-Poisson regression modelling was first introduced by \citet{guikema2008} and \cite{sellers2010} in an effort to extend Poisson regression by allowing covariates 
to influence not only the location parameter $\mu_i$ but also the dispersion parameter $\nu_i$. For notational simplicity we now let $\theta_i = (\mu_i, \nu_i)^T$ be 
the parameter for observation $y_i$ with $\theta_i$ derived from setting $\theta_i = \eta(\beta, x_i)$ where $\eta(\beta, x_i)$ is a link function of the $b$ covariates 
and $\beta$ are the parameters of this link function. The major difficulty is that the likelihood of the COM-Poisson regression model involves multiple intractable 
normalising constants, 
\begin{equation}
f(y_{1:n} \vert \theta_{1:n}) = \prod_{i=1}^{n}f(y_i \vert \theta_{i}) =  \prod_{i=1}^{n}\frac{q_f(y_i \vert \theta_{i})}{\mathcal{Z}_f(\theta_{i})}, \quad \theta_i = \eta(\beta,x_i).
\label{eq:fulllike}
\end{equation}
The particular link function $\eta(\beta, x_i)$ for the COM-Poisson was specified by \citet{guikema2008} as a dual-link function to allow the available covariates to enter both the $\mu_i$ and $\nu_i$ parameter. In this dual-link function scenario the parameter $\beta$ is split into two sub-vectors as $\beta = (\beta_\mu, \beta_\nu)$, where $\beta_\mu$ are the coefficients for the $\mu$ link component and $\beta_\nu$ are the coefficients for the $\nu$ link component. The dual-link function is then
\begin{equation*}
\theta_i  = \begin{bmatrix}
\mu_i \\
\nu_i
\end{bmatrix}  =\eta(\beta, x_i) = 
\begin{bmatrix}
 \eta_1(\beta_\mu, x_i) \\
 \eta_2(\beta_\nu, x_i)
\end{bmatrix} 
  = \begin{bmatrix}
\exp \left(\beta_{\mu,0} + \sum_{j=1}^{b} \beta_{\mu,j}x_{ij} \right) \\
\exp \left(\beta_{\nu,0} + \sum_{j=1}^{b} \beta_{\nu,j} x_{ij} \right)
\end{bmatrix},
\end{equation*}
where some $\beta_{\mu}$ are set fixed to 0 when the covariate is not included in $\eta_1(\beta_\mu, x_i)$, similarly for $\eta_2(\beta_\nu, x_i)$.

Bayesian parameter inference for $\beta$ in COM-Poisson regression models is a doubly-intractable problem as we now demonstrate. Placing a prior on $\beta$, the posterior distribution for $\beta$ is formed with this prior and the complete likelihood \eqref{eq:fulllike} as
\begin{align}
\pi(\beta \vert y_{1:n}) &\propto f(y_{1:n} \vert \theta_{1:n}) \pi(\beta) \nonumber \\
&= \prod_{i=1}^{n}\frac{q_f(y_i \vert \theta_{i})}{\mathcal{Z}_f(\theta_{i})} \pi(\beta), \quad \theta_i = \eta(\beta,x_i). 
\label{eq:posterior}
\end{align}
This posterior is termed doubly-intractable since the likelihood is intractable, containing $n$ intractable normalising constants, and the posterior itself cannot 
be normalised. Without the ability to evaluate $f(y_{1:n} \vert \theta_{1:n})$ pointwise, the direct application of standard MCMC methods is infeasible. For example, a 
Metropolis-Hastings algorithm proposing a move in the link function from $\beta$ to $\beta^\prime$ using a proposal distribution $h(\beta, \beta^\prime)$ requires evaluation of the intractable ratios $\left\{\frac{\mathcal{Z}_f(\theta_{i})}{\mathcal{Z}_f(\theta^\prime_{i})} \right\}_{i=1}^{n}$ to compute the acceptance ratio
\begin{equation}
\alpha(\beta, \beta^\prime) = \min \left\{ 1, \frac{\prod_{i=1}^{n}\frac{q_f(y_i \vert \theta^\prime_{i})}{\mathcal{Z}_f(\theta^\prime_{i})}}{\prod_{i=1}^{n}\frac{q_f(y_i \vert \theta^{}_{i})}{\mathcal{Z}_f(\theta^{}_{i})}}
\frac{h(\beta^\prime, \beta)}{h(\beta, \beta^\prime)}
\frac{\pi(\beta^\prime)}{\pi(\beta)} \right\}.
\label{eq:acceptratio}
\end{equation}
Possible methods to compute this acceptance ratio include replacing the intractable ratios with approximations using the asymptotic approximation from \cite{shmueli2007} or by a truncated sum $\mathcal{Z}_f(\theta_i) = \sum_{i=0}^{k} q_f(y \vert \theta_i)$ but both of these methods introduce some bias into the acceptance ratio.

An ingenious solution to the issue of computing the acceptance ratio \eqref{eq:acceptratio} was proposed by \citet{murray2006} drawing on earlier work by \cite{moeller2006}. The solution, known as the exchange algorithm, augments the doubly-intractable posterior \eqref{eq:posterior} with auxiliary data, provided that it is possible to sample data exactly from the intractable likelihood. The original exchange algorithm overcomes a single intractable ratio i.e. $n=1$ however the algorithm can easily be extended to the case of $n > 1$ intractable ratios as we now show. Starting from the posterior \eqref{eq:posterior} which contains the product of $n$ independent non-identical intractable likelihood functions, the exchange algorithm proposes, as in standard Metropolis-Hastings, to update the parameter from the current state $\beta$ to proposed state $\beta^\prime$ using $h(\beta, \beta^\prime)$ but in addition the posterior is augmented with $n$ auxiliary draws $y_{1:n}^\prime = (y^\prime_1, \dots y^\prime_n)$ generated from the likelihood at the proposed parameters $\theta^\prime_{i} = \eta(\beta^\prime, x_{i})$. The augmented posterior is written
\begin{equation}
\pi(\beta, \beta^\prime, y_{1:n}^\prime \vert y_{1:n}) \propto \underbrace{f(y_{1:n} \vert \theta_{1:n})}_{\text{likelihood}}  \pi(\beta) h(\beta, \beta^\prime)
\underbrace{f(y^\prime_{1:n} \vert \theta^{\prime}_{1:n})}_\text{auxiliary draws} 
\label{eq:augposterior}
\end{equation}
and the marginal of this augmented posterior for $\beta$ is the target posterior of interest \eqref{eq:posterior}. The algorithm can be seen as a Markov chain over an augmented parameter $(\beta, y_{1:n})$.
The acceptance ratio for the augmented posterior \eqref{eq:augposterior} is now tractable due to the cancellation of all intractable normalising constants
\begin{align}
\alpha_{\text{exchange},n}(\beta, \beta^\prime) = \min \left\{1, \frac{\prod_{i=1}^n\frac{q_f(y_i \vert \theta^\prime_{i})}{\mathcal{Z}_f(\theta^\prime_{i})}}{\prod_{i=1}^n\frac{q_f(y_i \vert \theta^{}_{i})}{\mathcal{Z}_f(\theta^{}_{i})}}
\frac{h(\beta^\prime, \beta)}{h(\beta, \beta^\prime)}
\frac{\pi(\beta^\prime)}{\pi(\beta)}
\frac{\prod_{i=1}^n\frac{q_f(y^\prime_i \vert \theta^{}_{i})}{\mathcal{Z}_f(\theta^{}_{i})}}{\prod_{i=1}^n\frac{q_f(y^\prime_i \vert \theta^\prime_{i})}{\mathcal{Z}_f(\theta^\prime_{i})}}\right\}, \quad y^\prime_i \sim f(\cdot \vert \theta^\prime_{i}), \, \forall i \nonumber \\[2ex]
=
\min \left \{1, \frac{\prod_{i=1}^nq_f(y_i \vert \theta^\prime_{i})}{\prod_{i=1}^n q_f(y_i \vert \theta_{i})}
\frac{h(\beta^\prime, \beta)}{h(\beta, \beta^\prime)}
\frac{\pi(\beta^\prime)}{\pi(\beta)}
\frac{\prod_{i=1}^n q_f(y^\prime_i \vert \theta_{i})}{\prod_{i=1}^n q_f(y^\prime_i \vert \theta^\prime_{i})}
\xcancel{
\frac{\prod_{1=1}^{n}\frac{1}{\mathcal{Z}_f \left(\theta^\prime_{i} \right)}}{\prod_{1=1}^{n}\frac{1}{\mathcal{Z}_f \left(\theta_{i} \right)}}
\frac{\prod_{1=1}^{n}\frac{1}{\mathcal{Z}_f \left(\theta_{i} \right)}}{\prod_{1=1}^{n}\frac{1}{\mathcal{Z}_f \left(\theta^\prime_{i} \right)}}
}
\right \}
\label{eq:acceptexchange}
\end{align}
The clever cancellation of the normalising constants in \eqref{eq:acceptexchange} is due to the \textit{exchange} of parameters ($\theta_{i}, \theta^\prime_{i}$) associated with the observed data ($y_{1:n}$) and the auxiliary data ($y_{1:n}^\prime$), the auxiliary being data discarded after each move. The major requirement in the exchange algorithm is the ability to generate the auxiliary data, requiring exact draws from the likelihood at each proposed parameter. Perfect sampling \citep{propp1996} from the likelihood is one possible method to do this but can be prohibitively expensive for many models. In certain models sampling from an intractable likelihood may even be as difficult as computing the intractable normalising constant itself and instead one could resort to sampling using an approximate likelihood sampler such as in the case of the exponential random graph model \citep{caimo2011}. The COM-Poisson distribution was shown by \citet{chanialidis2017} to be amenable to rejection sampling, allowing the auxiliary data to be generated. In Section \ref{sec:cmpsampler} we present our faster rejection sampler having a cheaper sampling cost than the sampler of \citet{chanialidis2017}. Rejection sampling is not applicable to all intractable likelihoods but when rejection sampling is available we show in Section \ref{sec:unbiasedlike} that the intractable likelihood can be estimated without bias through an unbiased estimate of the reciprocal normalising constant. This unbiased intractable likelihood estimator is also guaranteed to be positive which differs from other estimates using Russian roulette stochastic truncation \citep{lyne2015} or truncated Markov chains \citep{wei2016}.

\section{Rejection sampling to an intractable likelihood estimator}
\label{sec:rejsampling}
Rejection sampling \citep{vonneumann1951} is a sampling scheme to generate statistically independent samples from a target probability distribution of interest. We denote the target density (likelihood) as
\begin{equation*}
f(y \vert \theta) = \frac{q_f(y \vert \theta)}{\mathcal{Z}_f(\theta)}
\end{equation*}
where in this section we drop the observation index $i$ for ease of notation.
 This target density may be tractable or intractable but is assumed in any case to be difficult to sample from. The rejection sampling method works by constructing 
 an envelope distribution over the target density, proposing samples from this envelope distribution and choosing to accept a portion of these samples. The envelope 
 distribution should be chosen so that it is both computationally efficient to  sample from and matches closely in shape the target density. The portion of samples 
 that will be accepted depends on the similarity of the envelope and the target distribution. The closer the envelope distribution matches the target distribution, 
 the higher the number of samples that will be accepted. To generate a single draw $y$ from $f(y \vert \theta)$ using rejection sampling, we consider an envelope distribution $g(y \vert \gamma)$ where $\gamma \in \Gamma$ is its parameter. We assume that the envelope density can be written in a similar form to $f(y \vert \theta)$ as
\begin{equation*}
g(y \vert \gamma) = \frac{q_g(y \vert \gamma)}{\mathcal{Z}_g(\gamma)}, \quad \text{with} \hspace{0.5em} \mathcal{Z}_g(\gamma) = \int_{y} q_g(y \vert \gamma) \, dy.
\end{equation*}
The envelope distribution's normalising constant $\mathcal{Z}_g(\gamma)$ may be either tractable or intractable. A necessary condition for $g (\cdot\vert \gamma)$ is that it dominates the support of $f(\cdot\vert\theta)$  i.e.\ $g(y \vert \gamma) = 0 \Longrightarrow f(y \vert \theta) = 0$. It is also necessary to bound $f(y\vert\theta)$ using $g(y \vert \gamma)$ and a positive finite bounding constant $M$ that satisfies the envelope inequality $Mg(y \vert \gamma) > f(y \vert \theta), \, \forall y$. Finding such a constant can be a difficult task even for tractable densities. The optimal value of $M$ is denoted as $M_{f/g} = \sup_y \left\{ \frac{f(y \vert \theta)}{g(y \vert \gamma)} \right\}$. In the case of intractable likelihoods, this optimal constant is impossible to evaluate since at least one of either $\mathcal{Z}_f(\theta)$ or $\mathcal{Z}_g(\gamma)$ is unknown leaving $M_{f/g}$ intractable. In the case where both $f(y \vert \theta)$ and $g(y \vert \gamma)$ are tractable distributions and $M_{f/g}$ can be found, the vanilla rejection sampling algorithm to obtain a single draw from $f(y \vert \theta)$ proceeds as in Algorithm \ref{alg:vanillarejectionsampling}.
\begin{algorithm}[htbp]
	\linespread{1.2}\selectfont
	\SetKwInput{Input}{Input}
	\Input{Target distribution $f(\cdot \vert \theta)$, envelope distribution $g(\cdot \vert \gamma)$ and $M_{f/g} = \sup_y \left\{ \frac{f(y \vert \theta)}{g(y \vert \gamma)}\right\}$}
	Sample a proposal $y^{\ast}$ from $g(\cdot \vert \gamma)$\\
	Accept $y^{\ast}$ as a draw from $f(\cdot \vert \theta)$ by a Bernoulli trial with acceptance probability 
	\begin{equation}
	\alpha(y^{\ast}) = \frac{f(y^{\ast} \vert \theta)}{M_{f/g}g(y^{\ast} \vert \gamma)},
	\label{eq:acceptystarprob}
	\end{equation} otherwise repeat. \\
	\caption{Rejection sampling}
	\label{alg:vanillarejectionsampling}
\end{algorithm}

Algorithm \ref{alg:vanillarejectionsampling} introduces the rejection sampling Bernoulli trial with the outcome conditional on the proposed $y^{\ast}$ through the acceptance probability $\alpha(y^\ast)$. This Bernoulli trial is at the core of the rejection sampling method as it decides for a given $y^{\ast}$ what proportion of samples from $g(\cdot \vert \gamma)$ to accept. Once intractability is introduced into this Bernoulli trial through $M_{f/g}$, it would appear that $\alpha(y^{\ast})$ becomes intractable but this is fortunately not the case. The intractable bounding constant $M_{f/g}$ can be decomposed into tractable and intractable components by extracting the reciprocal normalising constants from $M_{f/g}$ as follows
\begin{equation}
M_{f/g} = \sup_y \left\{ \frac{f(y \vert \theta)}{g(y \vert \gamma)} \right\} = \frac{1/\mathcal{Z}_f(\theta)}{1/\mathcal{Z}_g(\gamma)} \sup_y \left\{ \frac{q_f(y \vert \theta)}{q_g(y \vert \gamma)} \right\} = \frac{\mathcal{Z}_g(\gamma)}{\mathcal{Z}_f(\theta)}B_{f/g}
\label{eq:mfgbfg}
\end{equation}
and this introduces the tractable bounding constant $B_{f/g} =  \sup_y \left\{ \frac{q_f(y \vert \theta)}{q_g(y \vert \gamma)} \right\}$. $B_{f/g}$ can be viewed as the upper bound on the ratio of the unnormalised densities $q_f(\cdot\vert\theta)$ and $q_g(\cdot\vert\gamma)$, whereas $M_{f/g}$ is the upper bound on the associated normalised densities $f(\cdot\vert\theta)$ and $g(\cdot\vert\gamma)$. 
The advantage of rejection sampling algorithms is that they can be run without full knowledge of $M_{f/g}$, it is sufficient to know $B_{f/g}$ which is usually more readily available. Substituting the expression $M_{f/g} =\frac{\mathcal{Z}_g(\gamma)}{\mathcal{Z}_f(\theta)}B_{f/g}$ into the acceptance probability \eqref{eq:acceptystarprob} gives the tractable acceptance probability for the conditional Bernoulli trial as
\begin{equation*}
\alpha(y^{\ast}) = \frac{f(y^{\ast} \vert \theta)}{\left({\frac{\mathcal{Z}_g(\gamma)}{\mathcal{Z}_f(\theta)}B_{f/g}}\right)g(y^{\ast} \vert \gamma)} = \frac{q_f(y^{\ast} \vert \theta)\frac{1}{\mathcal{Z}_f(\theta)}}{\left({\frac{\mathcal{Z}_g(\gamma)}{\mathcal{Z}_f(\theta)}B_{f/g}}\right)q_g(y^{\ast} \vert \gamma)\frac{1}{\mathcal{Z}_g(\gamma)}} = \frac{q_f(y^{\ast}\vert \theta)}{B_{f/g}q_g(y^{\ast}\vert \gamma)}.
\end{equation*}
It is important to note that $\mathcal{Z}_g(\gamma)$ is also not required to be tractable, although it typically is for simple envelope distributions. A case where $\mathcal{Z}_g(\gamma)$ is intractable would be when the envelope distribution itself requires sampling by another rejection sampler and this opens the possibility of chaining many rejection samplers together to sample from evermore complex target distributions. The main purpose of rejection sampling in this paper is to provide  a means to generate the auxiliary draws required within the exchange algorithm acceptance ratio \eqref{eq:acceptexchange} allowing the doubly-intractable inference to be performed but Section \ref{sec:unbiasedlike} will show another novel usage for rejection sampling. Before discussing this, we demonstrate our COM-Poisson rejection sampler in the next section.

\subsection{COM-Poisson rejection sampler}
\label{sec:cmpsampler}
For the COM-Poisson distribution, our sampler uses a choice  of two envelope distributions dependent on the value of $\nu$. This choice is motivated by Figure~\ref{fig:envelopes}, where a Poisson distribution is used in the case of $\nu \geq 1$ and a geometric distribution in the case of $\nu < 1$. Theorem \ref{thm:exactcmpsampler} demonstrates and proves why this works. The COM-Poisson rejection sampling algorithm is given in pseudocode after this theorem.

\begin{theorem}[COM-Poisson intractable rejection sampler]
	\label{thm:exactcmpsampler}
	Suppose that $Y \sim g_1(y \vert \gamma)$, a Poisson random variable with parameter $\gamma = \mu$ and $Y \sim g_2(y \vert \gamma)$, a geometric random variable with parameter $\gamma = p$, 
	for some $0<p<1$, are used as two enveloping distributions as follows,
	\begin{align}
	g(y \vert \gamma) &= 
	\begin{dcases}
	g_1( y \vert \gamma = p) = p(1-p)^{y}, & \text{if $\nu < 1$ (geometric envelope)},\\
	g_2( y \vert \gamma = \mu) = \dfrac{\mu^y}{e^\mu y!}, & \text{if $\nu \geq 1$ (Poisson envelope)},
	\end{dcases}
	\intertext{together with the following tractable enveloping bounds}
	B_{f/g} &= 
	\begin{dcases}
	\frac{1}{p} \dfrac{{\mu}^{\left(\nu\left\lfloor \frac{\mu}{(1-p)^{{1}/{\nu}}} \right\rfloor\right)}}{(1-p)_{}	^{	\left(\left\lfloor \frac{\mu}{(1-p)^{{1}/{\nu}}} \right\rfloor\right)} \left(\left\lfloor \frac{\mu}{(1-p)^{{1}/{\nu}}} \right\rfloor!\right)	^\nu}, & \text{if $\nu < 1$}, \\
	\left(\dfrac{\mu^{\lfloor \mu \rfloor}}{\lfloor \mu \rfloor!}\right)^{\nu-1}, & \text{if $\nu \geq 1$}.
	\end{dcases}
	\label{eq:bfg}
	\end{align}
	then a sample from the COM-Poisson distribution can be drawn at any parameter value ($\mu, \nu$).	
\end{theorem}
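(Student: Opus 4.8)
The plan is to reduce the entire statement to a single analytic fact: that the two expressions displayed in \eqref{eq:bfg} are exactly the tractable bounding constants $B_{f/g} = \sup_y \{ q_f(y \vert \mu,\nu)/q_g(y\vert\gamma)\}$ for the corresponding envelope. Once this is in hand, correctness of the sampler is immediate from the rejection-sampling framework of Section~\ref{sec:rejsampling}: with a finite $B_{f/g}$ the acceptance probability $\alpha(y^\ast) = q_f(y^\ast\vert\theta)/(B_{f/g}\, q_g(y^\ast\vert\gamma))$ lies in $[0,1]$, both envelopes share the support $\{0,1,2,\dots\}$ of the COM-Poisson law so the domination condition $g(y)=0\Rightarrow f(y)=0$ holds trivially, and Algorithm~\ref{alg:vanillarejectionsampling} then returns exact draws. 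The real content is therefore verifying the two suprema.

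For the Poisson envelope ($\nu\geq 1$) I would form the ratio of unnormalised masses
\[
\frac{q_f(y\vert\mu,\nu)}{q_g(y\vert\mu)} = \frac{(\mu^y/y!)^\nu}{\mu^y/y!} = \left(\frac{\mu^y}{y!}\right)^{\nu-1},
\]
and observe that since $\nu-1\geq 0$ this is an increasing function of the Poisson mass $\mu^y/y!$, hence maximised exactly where that mass is, namely at the Poisson mode $y=\lfloor\mu\rfloor$. Substituting gives the claimed $(\mu^{\lfloor\mu\rfloor}/\lfloor\mu\rfloor!)^{\nu-1}$, which is finite and equals $1$ when $\nu=1$, consistent with the COM-Poisson reducing to the Poisson. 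This computation also explains the case split: for $\nu<1$ the exponent $\nu-1$ is negative and the same ratio diverges as $y\to\infty$, so the Poisson envelope fails and a heavier-tailed envelope is forced.

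For the geometric envelope ($\nu<1$) I would write $\phi(y) := q_f(y\vert\mu,\nu)/q_g(y\vert p) = (\mu^y/y!)^\nu/(p(1-p)^y)$ and locate its maximiser over the integers by the consecutive-ratio test. Since
\[
\frac{\phi(y)}{\phi(y-1)} = \frac{\mu^\nu}{y^\nu(1-p)},
\]
this ratio exceeds $1$ if and only if $y < \mu/(1-p)^{1/\nu}$, so $\phi$ is unimodal with peak at $y^\ast = \lfloor \mu/(1-p)^{1/\nu}\rfloor$; evaluating $\phi(y^\ast)$ reproduces the stated expression, the prefactor $1/p$ arising from using the full geometric mass as $q_g$. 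The step needing genuine care is finiteness of this supremum, i.e.\ that the geometric tail actually dominates. A Stirling estimate $y!\sim\sqrt{2\pi y}\,(y/e)^y$ shows $\log\phi(y)$ is governed by the leading term $-\nu y\log y$, so $\phi(y)\to 0$ as $y\to\infty$ for every fixed $p\in(0,1)$ and $\nu>0$; hence the supremum is attained at the finite $y^\ast$ and the envelope is valid for any choice of $p$.

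The main obstacle is thus not the finiteness estimate, which is routine, but pinning down the discrete maximiser cleanly through the floor, including the boundary case in which $\mu/(1-p)^{1/\nu}$ is an integer and two consecutive values of $\phi$ tie. In that case the ratio test yields $\phi(y^\ast)=\phi(y^\ast-1)$ at equality, but $\lfloor\mu/(1-p)^{1/\nu}\rfloor$ still selects one of the two maximisers, so the stated formula stands; an analogous remark covers the integer-$\mu$ tie in the Poisson case via the paper's observation that the mode is then double. Collecting these, both $B_{f/g}$ in \eqref{eq:bfg} are the exact, finite suprema, and the conclusion that a COM-Poisson draw can be produced at any $(\mu,\nu)$ follows directly.
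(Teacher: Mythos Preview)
Your proposal is correct and follows essentially the same route as the paper: in both cases you form the ratio of unnormalised masses and locate its integer maximiser by comparing adjacent values (your consecutive-ratio test is exactly the paper's reduction to the inequalities at $h=\pm 1$), arriving at the same floor expressions for $y_m$ and hence the same $B_{f/g}$. Your additional remarks---the Stirling tail estimate establishing finiteness in the geometric case and the handling of ties at integer thresholds---go slightly beyond what the paper makes explicit, but the core argument is identical.
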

\begin{proof}
	See Appendix \ref{sec:cmpproof}.
\end{proof}

\begin{figure}[htbp]
	\includegraphics[width=\textwidth]{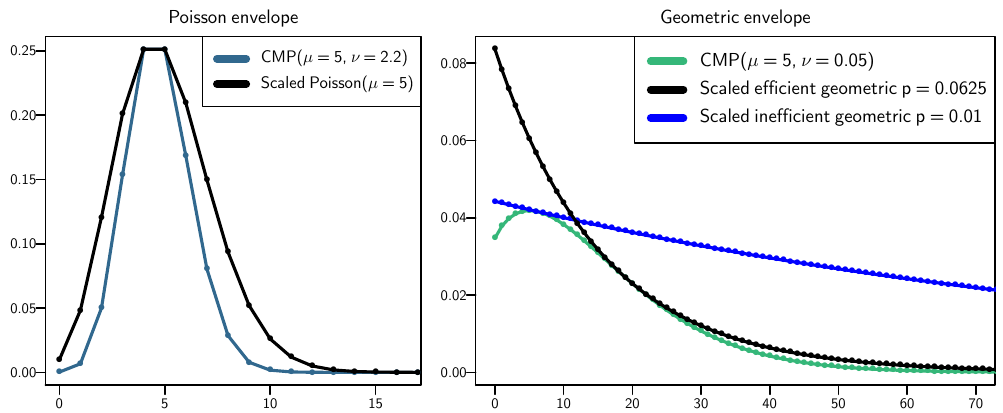}
	\caption{\textbf{Enveloping the COM-Poisson distribution for rejection sampling:} In the underdispersed ($\nu \geq 1$) case (left) a Poisson distribution, having a higher relative dispersion is scaled to envelope the COM-Poisson 
		distribution. In the overdispersed ($\nu < 1$) case (right) a geometric distribution, having a higher relative dispersion is used. The parameter $p$ of the geometric distribution in the overdispersed case controls 
		the efficiency of the enveloping. To choose $p$ we match the first moment of the geometric distribution with the COM-Poisson distribution \eqref{eq:momentmatching}, shown as the black curve (right). The blue curve corresponds 
		to an alternative choice of $p$, showing how an inefficient choice of $p$ results in a geometric envelope with a high rejection region (area between blue and green curves).}
	\label{fig:envelopes}
\end{figure}

Before we present the practical implementation of the COM-Poisson sampling algorithm developed from Theorem~\ref{thm:exactcmpsampler}, it is necessary to provide some brief discussion on the choice of $p$ in the 
geometric envelope. The optimal $p$ parameter, $p^*$ say, would be one for which the acceptance rate of the rejection sampling is maximised. However it is not possible to compute $p^*$ since the acceptance rate, as we will show in Section \ref{sec:unbiasedlike}, involves intractable normalising constants. Nevertheless any choice of $p$ will result in a valid rejection sampler and clearly the
closer $p$ is to $p^*$, the more efficient the sampler is. 
We have chosen to select $p$ by matching the first moment 
of the geometric to the approximate expected value of the COM-Poisson distribution using \eqref{eq:meanvarapprox}. 
Setting the first moment of the geometric distribution equal to the expected value approximation \eqref{eq:meanvarapprox} implies,
\begin{equation}
\frac{1-p}{p} = \mu + \frac{1}{2\nu} - \frac{1}{2}
\qquad \iff\qquad p = \frac{2\nu}{2\mu\nu + 1 + \nu}.
\label{eq:momentmatching}
\end{equation}

Algorithm \ref{alg:exact_cmp_sampler} presents pseudocode of the COM-Poisson sampler for coding on a computer.

\begin{algorithm}[htbp]
	\linespread{1.2}\selectfont
	\SetAlgoHangIndent{0pt}
	\SetKwInput{Input}{Input}
	\Input{Parameter $\theta = (\mu, \nu)$}
	START \\
	\If{$\nu \geq 1$}{
		Sample $y^\prime \sim \text{Poisson}(\mu)$ using the algorithm from \citet{ahrens1982}.\\
		Calculate $B^{\lbrack \nu \geq 1\rbrack}_{f/g}$ using \eqref{eq:bfg} and set $\alpha = \dfrac{(\mu^{y^\prime}/y^\prime	!)^\nu}{B^{\lbrack \nu \geq 1\rbrack}_{f/g}(\mu^{y^\prime}/y^\prime!)}$. \\
		Generate $u \sim \text{Uniform}(0,1)$.\\
		\eIf{$u \leq \alpha$}{\Return $y^\prime$}{GOTO START}
		
	}
	\If{$\nu < 1$}{
		Compute $p = \dfrac{2\nu}{2\mu\nu + 1 + \nu}$. \\
		Sample $y^\prime \sim \text{geometric}(p)$ by first sampling $u_0 \sim \text{Uniform}(0,1)$ and returning $\left\lfloor\frac{\log(u_0)}{\log(1-p)}\right\rfloor$. \\
		Calculate $B^{\lbrack \nu < 1\rbrack}_{f/g}$ using \eqref{eq:bfg} and set $\alpha = \dfrac{(\mu^{y^\prime}/y^\prime!)^\nu}{B^{\lbrack \nu < 1\rbrack}_{f/g}(1-p)^{y^\prime}p}$. \\
		Generate $u \sim \text{Uniform}(0,1)$.\\
		\eIf{$u \leq \alpha$}{\Return $y^\prime$}{GOTO START}
	}
	\textbf{Note:} For multiple draws from a COM-Poisson$(\mu,\nu)$ distribution, calculate the bound $B_{f/g}^{[\nu \geq1]}$ or $B^{[v<1]}_{f/g}$ once and then use it as an input to the algorithm.
	\caption{Sampler for the COM-Poisson$(\mu,\nu)$ distribution}
	\label{alg:exact_cmp_sampler}
\end{algorithm}

\subsection{Efficiency and comparison of the rejection sampling algorithm}
\label{sec:efficiency_comparison}

To analyse the efficiency of the COM-Poisson sampler we conducted an experiment in which the sampler was run for different values of $\mu$ each over a fine grid of $\nu$ values $\in [0,6]$, as outlined in Figure~\ref{fig:cmpsamplereff}. Here we monitored the acceptance rate of the COM-Poisson rejection sampler, based on the average of the reciprocal of the number of draws from the envelope distribution needed to yield a single draw from the COM-Poisson distribution over $100,000$ independent runs at each grid point. This illustrates that the sampler can achieve acceptance rates between $50\%$ and $100\%$ for $\nu > 1$ (corresponding to a Poisson envelope distribution) and acceptances rates between $30\%$ and $80\%$ for $\nu < 1$ (corresponding to a Geometric envelope distribution). For $\nu< 1$, the acceptance of the sampler increases as $\nu \to 0$. This occurs because the COM-Poisson distribution, with parameter $\nu = 0$, is exactly a geometic distribution with parameter $p = \mu^\nu$ provided $\mu^\nu < 1$. Thus the geometric proposal matches closely the COM-Poisson distribution for small $\nu$.
	\label{sec:cmpsamplereff}
	\begin{figure}[htb]
		\centering
		\includegraphics[width=\textwidth]{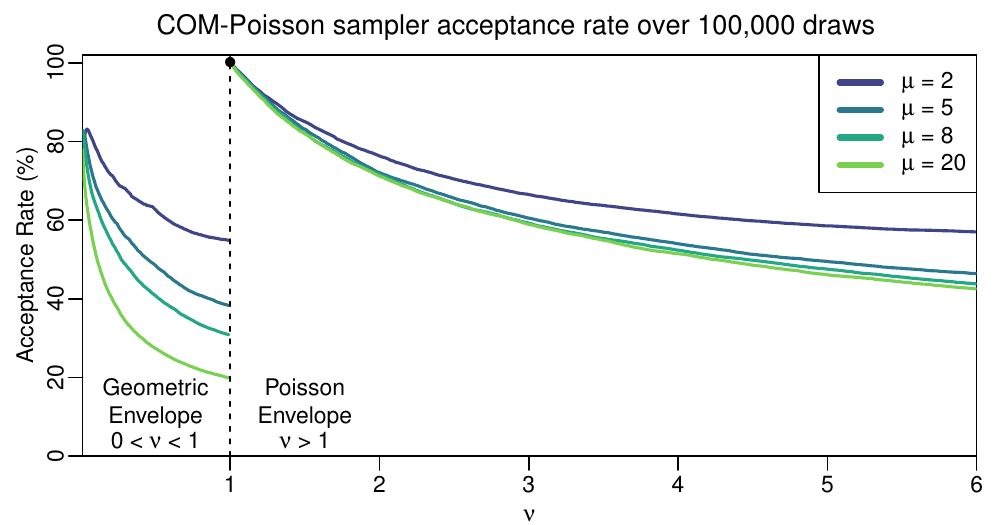}
		\caption{\textbf{COM-Poisson sampler efficiency:} The acceptance rate is the inverse of the number of draws required from either a geometric or a Poisson to obtain one draw from a COM-Poisson distribution for varying values of $\nu$ and $\mu$.}
		\label{fig:cmpsamplereff}
\end{figure}

%
%

In Figure~\ref{fig:speed_up} we compare the efficiency of our COM-Poisson rejection sampler to the rejection sampler of \citet{chanialidis2017}. The sampler developed in \citet{chanialidis2017} involves constructing a tight envelope distribution around the COM-Poisson distribution using a piecewise envelope distribution constructed from four truncated geometric distributions. This piecewise geometric distribution is then sampled using the inversion method for truncated geometric distributions. These truncated geometric distributions are designed to match closely to the COM-Poisson density function, however the computational cost of constructing this envelope is significant and can impact on the CPU time. Our COM-Poisson sampler described in Algorithm \ref{alg:exact_cmp_sampler} is much simpler in construction and requires only a single envelope component depending on the value of $\nu$. Although the single envelope can result in a moderately higher rejection rate compared to \citet{chanialidis2017} and as illustrated in Figure~\ref{fig:cmpsamplereff}, we describe an experiment to illustrate that a potentially higher rejection rate is offset, in CPU time, by the simplicity of constructing and sampling from the single envelope distribution. For example, spending time searching for $p^*$ may improve the acceptance rate; but this search is costly and will reduce the number of effective draws per unit time.

In this experiment we compare the computational run time of the sampler in Algorithm~\ref{alg:exact_cmp_sampler} to that presented in \citet{chanialidis2017} over a wide range of $\mu$ and $\nu$ values. To do this, a $128 \times 128$ grid of equally-spaced $\mu \in [1,25]$ and $\nu \in [0.01, 10.0]$ values was constructed. At each site $2500$ values were drawn from each sampler and the average time taken to draw from each sampler is computed. The plot in Figure~\ref{fig:speed_up} illustrates that Algorithm~\ref{alg:exact_cmp_sampler} is between $1.01$ and $8.01$ times faster. From this one can observe that our sampler represents a significant speedup when used in the context of an MCMC algorithm, as outlined in Section~\ref{sec:comparesampeff}.
This analysis outlines an important aspect of rejection sampling in that the choice of using an efficient envelope must be considered simultaneously with the computational cost of sampling from and constructing this efficient envelope.

\begin{figure}[!ht]
\centering
\includegraphics[width=\textwidth]{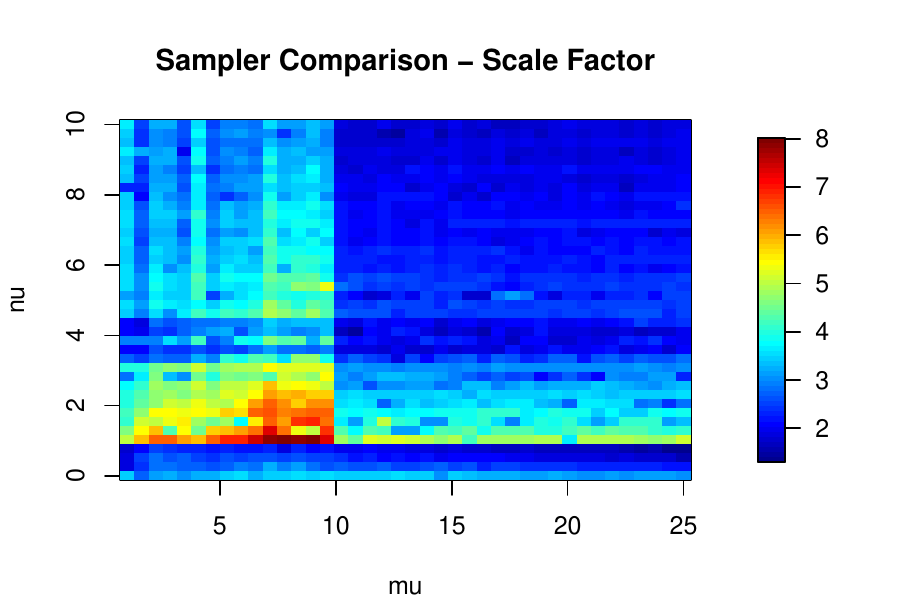}
\caption{This plot shows an experiment to compare the sampler in Algorithm~\ref{alg:exact_cmp_sampler} to that in \citet{chanialidis2017} over a wide range of $\mu$ and $\nu$ values. A $128 \times 128$ grid of $\mu \in [1,25]$ and $\nu \in [0.01, 10.0]$ values is constructed. At each grid point $2500$ values are drawn from each sampler and the average time per sample is computed. The above plot displays the ratio of average sample time from both sampler, where algorithm~\ref{alg:exact_cmp_sampler} is in the dominator of the ratio. At its fastest Algorithm~\ref{alg:exact_cmp_sampler} is 8.01 times faster and at its slowest 1.01 times faster.}
\label{fig:speed_up}
\end{figure}

The speedup of our algorithm in Figure~\ref{fig:speed_up} is notably reduced for $\mu >= 10$. The reason for this reduction is a subtle artefact of computer generation of Poisson samples for large $\mu$. As discussed in \citet{ahrens1982}, computer implementations of Poisson samplers generally switch to a different sampler for values of $\mu \geq 10$. Further work could examine methods to overcome this restriction in order to improve our sampler in this region. Moreover, the results of the experiment in Figure~\ref{fig:speed_up} can be viewed as a lower bound to the potential speedup of our algorithm.

Another benefit of rejection sampling algorithms is that they can be implemented in parallel thus speeding up the computation, however care is needed when implementing parallel random number generators with potential issues discussed in \citet{sprng2000}. Algorithm \ref{alg:exact_cmp_sampler} gives the pseudocode for the COM-Poisson rejection sampler.

\subsection{Unbiased likelihood estimation by monitoring rejection sampling efficiency}
\label{sec:unbiasedlike}
\textbf{Note:} \textit{In this section we reintroduce the index parameter $i$ to discuss rejection sampling at different parameters $\theta_i$, each $\theta_i$ necessarily introduces indexed tractable and intractable sampling bounds and envelope parameters which we denote as $B_{f/g,i}$, $M_{f/g,i}$ and $\gamma_i$ respectively.}

Literature on rejection sampling has explored some methods for recycling the rejected proposals from the envelope distribution. \citet{casella1996} consider reducing the variance of rejection sampling estimates by incorporating the rejected samples in a post-processing step using the Rao-Blackwell theorem. \citet{rao2016} uses the rejected proposals from a rejection sampler in a different way by considering the joint distribution of rejected and accepted draws leading to a method to perform doubly-intractable inference in the case where the likelihood has an associated rejection sampler. Our approach considers yet another use for the rejected proposals by examining the relationship of rejected proposals with the efficiency of the rejection sampler. The efficiency is used to provide an unbiased estimator of the reciprocal normalising constant $\frac{1}{\mathcal{Z}_f(\theta_{i})}$ for each parameter $\theta_i$. An unbiased estimator of the reciprocal normalising constant at each $\theta_i$ will lead directly to an unbiased estimator of the complete likelihood \eqref{eq:fulllike}.

The efficiency of the rejection sampler is critical to the overall performance of the exchange algorithm. The efficiency is defined as the total number of draws $N_1$ required from the envelope distribution until the first acceptance of a draw from $f(y \vert \theta_i)$ or in general the total number of draws $N_r$ required until $r$ acceptances from $f(y \vert \theta_i)$. The number of draws determines how closely the envelope distribution matches the target distribution. The ideal envelope distribution would exactly match the target distribution and reject no proposals, however in this scenario we could simply circumvent rejection sampling and sample from the envelope distribution directly. In Algorithm \ref{alg:vanillarejectionsampling}, run at parameter $\theta_i$, each of the samples proposed from the envelope is followed by a Bernoulli trial with conditional acceptance probability $\alpha_i(y^\ast) = \frac{f(y^{\ast} \vert \theta_{i})}{M_{f/g,i}g(y^{\ast} \vert \gamma_{i})}$. To remove the conditional dependence on $y^\ast$, consider the unconditional acceptance probability $\bar{\alpha}_i$ over all envelope proposals by integrating $\alpha_i(y^\ast)$ with respect to the envelope,
\begin{equation*}
\bar{\alpha}_i = \int_{y^{\ast}} \alpha_i(y^{\ast})  g(y^{\ast} \vert \gamma_{i}) dy^{\ast} = \int_{y^{\ast}} \frac{f(y^{\ast} \vert \theta_{i})}{M_{f/g,i}} \, dy^{\ast} = \frac{1}{M_{f/g,i}}.
\end{equation*} 
It is now obvious that the unconditional acceptance probability and from this the rejection sampler efficiency are controlled directly by the magnitude of the intractable bound $M_{f/g,i}$. On average, one expects to accept a portion $\frac{1}{M_{f/g,i}}$ of proposals from $g(\cdot \vert \gamma_{i})$ as being from $f(\cdot \vert \theta_{i})$ and to discard the remainder. Equivalently, the total number of draws until the first accepted draw follows a geometric distribution with success probability $\frac{1}{M_{f/g,i}}$ and mean $M_{f/g,i}$ i.e.\ $N_1 \sim \text{Geometric}\left(\frac{1}{M_{f/g,i}}\right)$. In general the total number of draws ($N_r$) until $r$ acceptances follows a negative binomial distribution with mean $rM_{f/g,i}$. A natural method to estimate $M_{f/g,i}$ is to run rejection sampling at the parameter $\theta_{i}$ and record the observed total number of draws $n_{r,i}$ required for $r$ acceptances. Then an unbiased estimate of $M_{f/g,i}$ based on $r$ acceptances is given as
\begin{equation*}
\widehat{M}^{(r)}_{f/g,i} = \frac{n_{r,i}}{r}.
\end{equation*}
By the central limit theorem increasing $r$ will lead to lower variance estimates of $M_{f/g,i}$ since $\widehat{M}^{(r)}_{f/g,i}$ can be viewed as the average of $r$ independent geometric trials each having variance $M_{f/g,i}(M_{f/g,i}-1)$. To use $\widehat{M}^{(r)}_{f/g,i}$ to give an unbiased estimate of the likelihood, we have the final assumption that $Z_g(\gamma_{i})$ is known i.e.\ the envelope distribution is tractable.
Then by rewriting \eqref{eq:mfgbfg} a link emerges between the inverse normalising constants of $g( \cdot \vert \gamma_{i})$ and $f( \cdot \vert \theta_{i})$ as
\begin{equation}
\frac{1}{\mathcal{Z}_f(\theta_{i})} = \frac{1}{\mathcal{Z}_g(\gamma_{i})}\frac{M_{f/g,i}}{B_{f/g,i}}. 
\label{eq:invZlink}
\end{equation}
By replacing $M_{f/g,i}$ with its unbiased estimate $\widehat{M}^{(r)}_{f/g,i}$ and multiplying both sides by $q_f(y \vert \theta_{i})$ an unbiased estimator of the likelihood for observation $y_i$ is given as
\begin{equation}
\widehat{f}^{(r)} (y_i \vert \theta_{i}) = \frac{q_f(y_i \vert \theta_{i})}{\mathcal{Z}_g(\gamma_{i})} \frac{\widehat{M}^{(r)}_{f/g,i}}{B_{f/g,i}}
\label{eq:likeest} 
\end{equation}
with all terms on the RHS of \eqref{eq:likeest} known. The estimate of the complete likelihood follows as
\begin{equation}
\widehat{f}^{(r)} (y_{1:n} \vert \theta_{{1:n}}) = \prod_{i=1}^{n} \frac{q_f(y_i \vert \theta_{i})}{\mathcal{Z}_g(\gamma_{i})} \frac{\widehat{M}^{(r)}_{f/g,i}}{B_{f/g,i}}.
\label{eq:fulllikeest} 
\end{equation}
by estimating $\widehat{M}^{(r)}_{f/g,i}$ at each $\theta_i$.
We point out here that the idea to use the acceptance probability of a rejection sampling algorithm to develop an unbiased estimate of the intractable normalising constant has previously explored, in the context of inference for stochastic differential equations. We refer the reader to Section $5$ of \citet{beskos2006}.

This unbiased estimator of the intractable likelihood is very useful especially for model selection as it can be used to construct an estimate of the Bayesian information criterion (BIC) \citep{schwarz1978}.
Bayesian model choice is a challenging problem for models with intractable likelihood functions. The intractability prevents calculation of the likelihood which is required for most model selection criteria.
The BIC estimate from the unbiased likelihood estimator \eqref{eq:fulllikeest} is 
\begin{equation}
\widehat{\text{BIC}}^{(r)} = k\log(n) - 2\log \left(\widehat{f}^{(r)} (y_{1:n} \vert \hat{\theta}_{{1:n}}) \right)
\label{eq:approxBIC}
\end{equation}
where $\hat{\theta}_{1:n}$ maximises the unbiased estimate.
The variance of this BIC estimate depends on $r$ and we find in experiments that $r$ needs to be high ($>1000$) to achieve an accurate estimate of the BIC.
We note that an interesting future research question is to understand the statistical properties, such as consistency of the maximum likelihood estimator $\hat{\theta}_{1:n}$. A useful starting point might be to consider the literature where this issue has been explored for different intractable likelihood problems including 
\cite{geyer1992} and \citet{beskos2006}.

It may also be possible to use the unbiased likelihood estimator (\ref{eq:likeest}) as part of other Bayesian model selection frameworks, for example, 
in the calculation of the marginal likelihood (or evidence) and Bayes factors. This could be a focus of future work. In addition to BIC estimation we 
consider plugging the unbiased estimator directly into the intractable acceptance ratio \eqref{eq:acceptratio} which is the basis of pseudo-marginal 
MCMC algorithms discussed in the next section.

Finally, we note  that \citet{chanialidis2017} presented results comparing models based on the deviance information criterion (DIC) \citep{spieg2002}, without explaining how they overcame the need to evaluate the intractable 
likelihood for each draw from the MCMC sample. Following personal 
communication with the authors it turns out that they approximated the normalising constant, $\mathcal{Z}_f(\mu, \nu)$, using a truncated summation 
involving $k_2-k_1+1$ terms, for positive integers, $k_1$, $k_2$, where $k_1<k_2$, as follows,
\begin{equation}
\mathcal{Z}_f(\mu, \nu) \approx \hat{\mathcal{Z}}_f(\mu, \nu) = \sum_{y=k_1}^{k_2} \left( \frac{\mu^y}{y!} \right)^\nu.
\label{eq:truncsum}
\end{equation}
Here $k_1$ and $k_2$ are selected to satisfy the inequality $k_1 < \lfloor \mu \rfloor < k_2$, where, as before, $\lfloor \mu \rfloor$ is the mode of the COM-Poisson 
distribution. In turn, they approximated the likelihood for each draw from the sample generated from their MCMC algorithm by
\begin{equation}
\hat{f}(y_{1:n} \vert \mu_{1:n}, \nu_{1:n}) = \prod_{i=1}^{n} \left( \frac{\mu_i^{y_i}}{y_i!} \right)^{\nu_i} \frac{1}{\hat{\mathcal{Z}}_f(\mu_i, \nu_i)},
\label{eqn:like_truncated}
\end{equation}
which is then used, for example, to estimate the posterior expected (deviance) log-likelihood which is required for calculation of the DIC. 
Of course, this introduces an approximation of the likelihood into the estimation of the DIC. This raises questions as to how many terms
$k_2-k_1+1$ are needed in order to get an accurate approximation of the likelihood as well as how to choose both $k_1$ and $k_2$ and finally raises the issue 
of the computational cost involved in evaluating the finite truncation for the $n$ terms in the product in (\ref{eqn:like_truncated}). We return to this issue in Section~\ref{sec:comparison_approx_likelihood}, where we find that the number of terms required to get an accurate approximation of the likelihood depends considerably on the values of $\mu$ and $\nu$.


\section{Pseudo-marginal MCMC with the intractable likelihood estimator}
\label{sec:pseudomarginal}
MCMC methods for Bayesian inference require the ability the evaluate the unnormalised posterior distribution. For doubly-intractable problems where the posterior distribution cannot be evaluated as such, there exists pseudo-marginal MCMC algorithms \citep{andrieu2009} which require only an unbiased positive estimate of the posterior distribution. Pseudo-marginal algorithms have been applied to a range of problems such as genetic modelling \citep{beaumont2003} and Markov jump processes \citep{georgoulas2017}. This pseudo-marginal approach can be seen as an alternative to the exchange algorithm which uses exact sampling in place of unbiased posterior estimators. 

We now describe the pseudo-marginal framework and show how our unbiased estimator of the likelihood \eqref{eq:fulllikeest} can be used. Consider a target distribution $\pi(\theta)$ which cannot be pointwise evaluated, but assume that it is possible to construct an unbiased estimate $\hat{\pi}(\theta, z)$ of the target  through the use of auxiliary variables $z \sim w(\cdot \vert \theta)$, $z \in Z$. Using $\hat{\pi}(\theta)$ in place of the true target in a Metropolis-Hastings algorithm leads to the approximate acceptance ratio
\begin{equation*}
\min \left \{1, \frac{\hat{\pi}(\theta^\prime)h(\theta^\prime,\theta)}{\hat{\pi}(\theta) h(\theta, \theta^\prime)} \right \}
\end{equation*}
Pseudo-marginal MCMC algorithms can be viewed as Metropolis-Hastings algorithms over the joint parameter and auxiliary variable state space $\theta \, \cup \, Z$. \cite{andrieu2009} describe two alternative forms: the ``Monte Carlo within Metropolis'' or MCWM algorithm and the ``grouped independence Metropolis-Hastings'' or GIMH algorithm, with only the latter algorithm having the correct target distribution $\pi(\theta)$. In MCWM, both the current $\hat{\pi}(\theta)$ and proposed $\hat{\pi}(\theta^\prime)$ estimate of the target are refreshed every iteration using new auxiliary draws $z$ and $z^\prime$. In GIMH, only the proposed estimate $\hat{\pi}(\theta^\prime)$ is refreshed and $\hat{\pi}(\theta)$ is fixed to the estimate used when $\theta$ was last accepted.

The difficulty with pseudo-marginal algorithms is in constructing an unbiased estimate of the target. Using the rejection sampling unbiased estimator \eqref{eq:fulllikeest} the pseudo-marginal acceptance ratio for the $\beta$ to $\beta^\prime$ move in the COM-Poisson model is
\begin{align}
\alpha_{PM,n}(\beta, \beta^\prime) &= \min \left\{1,\frac{\widehat{f}^{(r)} (y_{1:n} \vert \theta^\prime_{1:n})}{\widehat{f}^{(r)} (y_{1:n} \vert \theta_{{1:n}})}
\frac{h(\beta^\prime, \beta)}{h(\beta, \beta^\prime)} \frac{\pi(\beta^\prime)}{\pi(\beta)} \right\}  \nonumber
\\[2ex] &= \min \left\{1,\frac{\prod_{i=1}^{n} \frac{q_f(y_i \vert \theta^\prime_{i})}{\mathcal{Z}_g(\gamma^\prime_{i})} \frac{\widehat{M}^{\prime,(r)}_{f/g,i}}{B^\prime_{f/g,i}}}{\prod_{i=1}^{n} \frac{q_f(y_i \vert \theta_{i})}{\mathcal{Z}_g(\gamma_{i})} \frac{\widehat{M}^{(r)}_{f/g,i}}{B_{f/g,i}}}\frac{h(\beta^\prime, \beta)}{h(\beta, \beta^\prime)} \frac{\pi(\beta^\prime)}{\pi(\beta)} \right\} \nonumber
\\[2ex]& = \min \left\{1,\frac{\prod_{i=1}^{n} q_f(y_i \vert \theta^\prime_{i})\mathcal{Z}_g(\gamma_{i}) \widehat{M}^{\prime,(r)}_{f/g,i}B_{f/g,i}}{\prod_{i=1}^{n} q_f(y_i \vert \theta_{i})\mathcal{Z}_g(\gamma^\prime_{i}) \widehat{M}^{(r)}_{f/g,i}B^\prime_{f/g,i}}\frac{h(\beta^\prime, \beta)}{h(\beta, \beta^\prime)} \frac{\pi(\beta^\prime)}{\pi(\beta)} \right\}
\label{eq:pseudomarg}
\end{align}
This acceptance ratio is entirely tractable once the estimates $\widehat{M}^{(r)}_{f/g,i}$ have been computed by running rejection sampling at each $\theta^\prime_i$ for the GIMH algorithm or at both $\theta_i$ and $\theta^\prime_i$ for the MCWM algorithm.

\section{Results}
\label{sec:results}
We present two examples of COM-Poisson modelling. The first example is a retail inventory dataset from \citet{shmueli2005} which contains no covariate information. This first example is a toy example that demonstrates how the BIC can be approximated using the unbiased likelihood estimate. The second example is a takeover bids dataset which contains covariates which we model using a COM-Poisson regression model. The second example will firstly show the reduction in computation time available using our sampler and also the ability to choose different regression models using the unbiased likelihood estimate. 

\subsection{Inventory Data}
\label{sec:inventory}
\begin{figure}[!htbp]
	\centering
	\includegraphics[width=\textwidth]{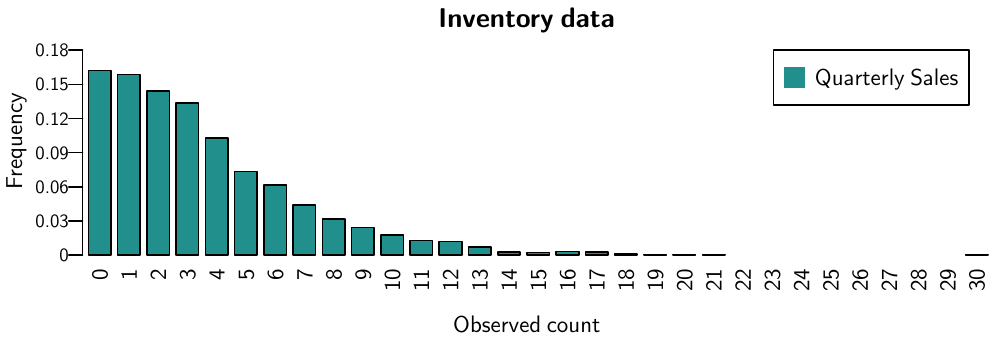}
	\caption{\textbf{Inventory data:} Sales of a particular item of clothing from a well-known clothing brand. The sample mean sales is 3.56 with sample variance 11.31.}
	\label{fig:inventorydata}
\end{figure}
Inventory data or stock count data is of high importance to retailers as it facilitates how they can make decisions regarding future stock levels. The counts we observe 
in this data are the quarterly sales counts of a particular item of clothing in each store across $n=3{,}168$ retail stores. A frequency barplot of the data is shown in Figure~\ref{fig:inventorydata}. The maximum quarterly sales of the clothing item sold was 30 in one store. No stores sold between 22 and 29 items and many stores had 0 sales of the clothing item. The sample mean sales is 3.56, sample variance 11.31 and the sample dispersion index is 3.18, thus overdispersed. This data was analysed by \citet{shmueli2005} using a maximum likelihood approach.

Since this data contains no extra covariate information, we model $\mu$ and $\nu$ directly rather than considering them as functions of covariates. This could be 
viewed as using the link function with only intercept where $\mu = \exp(\beta_{\mu,0}) $ and $\nu = \exp(\beta_{\nu,0}) $ but we consider it easier to model 
$\mu$ and $\nu$. We place Gamma(1, 1) and Gamma(0.0625, 0.25) priors on $\mu$ and $\nu$, respectively. We run the exchange algorithm for 200{,}000 iterations and discard the first 5{,}000 as a burn-in. Each parameter $\mu$ and $\nu$ were updated using a single-site update.
The results are shown in Table~\ref{table:inventory_results} with the target acceptance rate of $44\%$ for single-site updates achieved.
\begin{table}[htbp]
	\centering
	\caption{\textbf{Inventory data posterior estimates:} Posterior estimates from the exchange algorithm. The posterior estimate for $\nu$ shows overdispersion in data.}
	\begin{tabular}{@{\quad}cccc@{\quad}}
		\toprule
		& \textbf{Posterior Mean} & \textbf{Posterior SD} & \textbf{Accept Rate}\\ \midrule
		\multicolumn{1}{c}{$\mu$} & 0.8243     & 0.1444  & 44.39\%  \\ 
		\multicolumn{1}{c}{$\nu$} & 0.1286     & 0.0119    & 41.68\%     \\   \bottomrule	
	\end{tabular}
	\label{table:inventory_results}
\end{table}

\begin{figure}[!htbp]
	\centering
	\includegraphics[width=\textwidth]{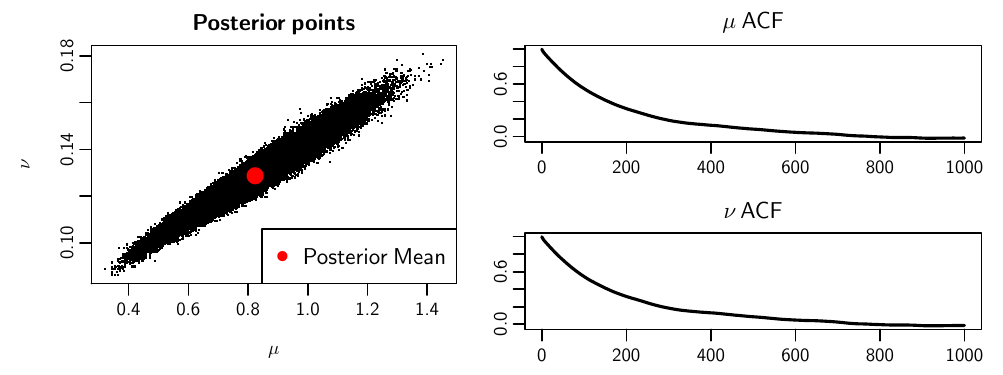}
	\caption{\textbf{Inventory data posterior visualisation:} The distribution of MCMC points for $(\mu,\nu)$ is shown along with posterior mean (left). The estimated autocorrelation function for $\mu$ (top right) and $\nu$ (bottom right).  There is high correlation between $\mu$ and $\nu$ in this example.}
\end{figure}

We now compare the COM-Poisson distribution to the Poisson distribution using $\widehat{\text{BIC}}^{(r)}$ with $r=5000$. The BIC for the Poisson distribution can be calculated exactly since the likelihood is tractable. The comparison of both models is shown in Table~\ref{table:bicinventory} where the COM-Poisson is shown to outperform the Poisson under BIC.

\begin{table}[htbp]
	\centering
	\caption{\textbf{Inventory data:} The COM-Poisson model significantly outperforms the Poisson model using the BIC. The estimate BIC for the COM-Poisson was calculated using $\widehat{\text{BIC}}^{(r)}$ with $r=5000$.}
	\begin{tabular}{@{\quad}lll@{\quad}}
		\toprule
		& \textbf{Poisson} & \textbf{COM-Poisson} \\
		\midrule
		BIC & 17927.68 & \textbf{15067.39}\\ \bottomrule
	\end{tabular}
	\label{table:bicinventory}
\end{table}

\subsubsection{Comparing our method to an approximate (truncated) likelihood approach}
\label{sec:comparison_approx_likelihood}
We now compare our exact approach to instead carrying out MCMC with the true likelihood in the posterior distribution replaced by an approximate likelihood using a truncated normalising constant to $m$ terms
\begin{equation}
\mathcal{Z}_f(\mu, \nu) \approx \hat{\mathcal{Z}}^{m}_f(\mu, \nu) = \sum_{y=0}^{m-1} \left( \frac{\mu^y}{y!} \right)^\nu.
\label{eqn:truncated_nc}
\end{equation}
Such a plug-in approximation of the likelihood leads to one carrying out MCMC on an approximate posterior distribution. In turn, this leads to the important issue of understanding the ergodicity of the resulting Markov chain, an issue which is attracting much interest in the literature, for example, the noisy MCMC approach of \citet{alquier2016} and the Russian roulette approach of \citet{lyne2015}. Therefore, an immediate question arising from using an approximation of the normalising constant, as detailed above~(\ref{eqn:truncated_nc}), would be to understand the convergence properties of the resulting noisy MCMC algorithm although it is beyond the scope of this paper. Of course, the exact sampling algorithm developed in this paper avoids such approximation. 

Instead, we carry out a small study to explore empirically estimated posterior means and standard deviations arising from the noisy MCMC algorithm derived by replacing the true likelihood with the approximate likelihood with truncated normalising constant. To do this, we again analyse the Inventory dataset experimenting with two choices for the number of terms $m$ in the truncated normalising constant~(\ref{eqn:truncated_nc}). 
In particular, we set $m=3{,}300$ as this lead to a similar computation time for one transition of the exact MCMC algorithm. 
Table~\ref{table:inventory_results_trunc} shows that the results of this experiment are in good agreement with the results of the exact MCMC algorithm detailed in Table~\ref{table:inventory_results}. We also note that the truncation of the normalising constant to $m=100$ terms yielded identical results to that of $m=3{,}300$, but resulted in an $86\%$ decrease in the time required by our exact sampler. 
However these results disguise the fact that $m=100$ or even $m=3{,}300$ terms are not sufficient to give an accurate likelihood approximation for all regions of the parameter space. To investigate, we began the algorithm at the starting point $(\mu=500, \nu=0.0001)$ and found that for both cases $m=100$ and $m=3{,}300$ the resulting noisy MCMC algorithms failed to converge whereas our algorithm converged to the posterior estimates as in Table~\ref{table:inventory_results}. The failure to converge is due to an insufficient number of terms in the truncation, meaning the truncation is only practical in some but not all areas of the parameter space. Of course, this has immediate implications when one implements this truncated normalised constant within an MCMC algorithm, as above, since the algorithm may visit areas of the parameter space for which the truncation is inaccurate. 
We also emphasise that one would not know in advance the appropriate value of $m$ to use. This is illustrated in Figure~\ref{fig:truncnormaccuracy} where we explore the accuracy of (\ref{eqn:truncated_nc}) for different values of $\mu$ and $\nu$.

\begin{table}[htbp]
	\centering
	\caption{\textbf{Truncated normalising constant estimates:} Posterior estimates for using a truncated normalising constant in the likelihood with $m=3{,}300$ terms. 
	}
	\begin{tabular}{@{\quad}cccccc@{\quad}}
		\toprule
		& \textbf{Posterior Mean} & \textbf{Posterior SD} & \textbf{Accept Rate} & \textbf{m} & \textbf{Time}\\ \midrule
		\multicolumn{1}{c}{$\mu$} & 0.8265     & 0.1448  & 44.05\% & 3{,}300 & 1.00  \\ 
		\multicolumn{1}{c}{$\nu$} & 0.1282     & 0.0119    & 45.71\% & 3{,}300 & 1.00    \\   \bottomrule	
	\end{tabular}
	\label{table:inventory_results_trunc}
\end{table}

\begin{figure}[htbp]
\includegraphics[width=\textwidth]{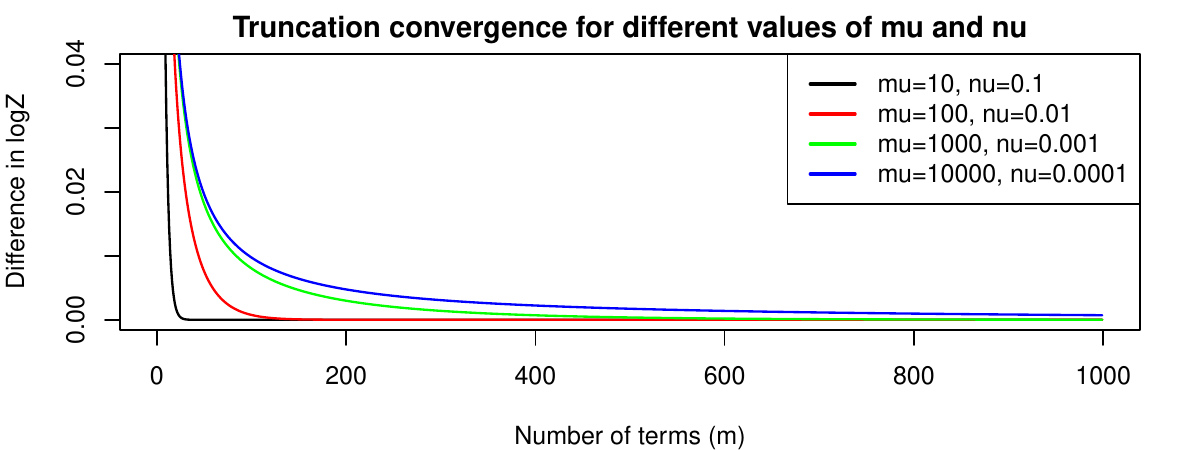}
\caption{\textbf{Truncated normalising constant convergence}: This figure shows that the number of terms required for a truncated normalising constant to converge changes with the value of $\mu$ and $\nu$. The y-axis represents the difference in $\log\left(\hat{\mathcal{Z}}^{m}_f(\mu, \nu)\right)$ for two consecutive values of $m$. This should converge to 0 but for large $\mu$ and small $\nu$ it can take over $1{,}000$ terms to converge.}
\label{fig:truncnormaccuracy}
\end{figure}

\subsection{Takeover bids COM-Poisson regression model}
\label{sec:takeover}

We consider the dataset from \citet{jaggia1993} on the number of bids received by U.S. firms that were targets of bid offers during the period 1978-85 and taken over within 1 year of the initial offer. The response variable is the count of takeover bids excluding the initial bid (\texttt{NUMBIDS}) filed for the particular company. The covariates available include variables which capture the firm-specific characteristics and defensive action taken by the firm in response to their initial takeover bid. A detailed list of available predictors with is given in Appendix \ref{app:takeover}. This data has previously been analysed using Poisson regression by \citet{jaggia1993} and series expansions by \cite{cameron1997}. We reconsider this analysis by comparing two Poisson regression models to three COM-Poisson regression models. The three models considered in our analysis are listed in Table~\ref{table:takeovermodels}. We performed a prior variable selection by fitting a Poisson GLM by maximum likelihood to the data to establish which predictors to consider in the Bayesian GLM analysis. The first model and simplest considers \texttt{NUMBIDS} in a Poisson GLM where the linear predictor of $\log \left(\mu_i\right)$ contains an intercept and two predictors (\texttt{BIDPREM} and \texttt{WHTKNGHT}). The second model adds an extra linear predictor, \texttt{SIZE}, to Model 1. For COM-Poisson regression we consider 3 models, Model 3 moves the covariate \texttt{SIZE} into the $\nu_i$ link function. Model 4 considers dropping \texttt{BIDPREM} from the $\mu_i$ link function and finally Model 5 considers adding another covariate $\texttt{FINREST}$ into the $\nu_i$ link function. For simplicity we use $\beta$ to denote coefficients within the $\mu_i$ link function and $\rho$ to denote coefficients within the $\nu_i$ link function.

Priors for $\beta$ and $\rho$ were set as $\mathcal{N}(0, 5^2)$. The exchange algorithm was run for 100{,}000 observations with a burnin of 10{,}000 draws discarded. Each coefficient was updated using a single-site update with a random walk proposal. The proposal variance of the random walk was tuned to achieve an acceptance rate of 44\% for each site. The posterior results for all models are presented in Table \ref{table:glmresults}. The BIC was calculated using \eqref{eq:approxBIC} with $r$ set to 5000 to give an accurate estimate of the BIC. Model 5 was selected as the best model with the lowest BIC of 386.40. All 3 COM-Poisson regression models are selected as better than the Poisson models, Model 1 and Model 2 suggesting that including covariates in the dispersion link function is worthwhile. 

\begin{table}[htbp]
	\centering
	\caption{\textbf{Takeover bids candidate models:} The $k$ column gives the number of parameters estimated which will be used directly in the calculation of the BIC. Descriptions of all covariates are given in Appendix \ref{app:takeover}.}
	\begin{tabular}{@{\quad}lclr@{\quad}}
		\toprule
		\textbf{Model} & \textbf{Response} ($\bm{y_i}$) & \textbf{Linear Predictor(s)} & \textbf{$\bm{k}$} \\ \midrule
		1 (Poisson) & \texttt{NUMBIDS} &  $\log(\mu_i) = \beta_0 + \beta_1\texttt{BIDPREM} + \beta_2\texttt{WHTKNGHT}$ & 3 \\[2ex]
		2 (Poisson) & \texttt{NUMBIDS} &  $\log(\mu_i) = \beta_0 + \beta_1\texttt{BIDPREM} + \beta_2\texttt{WHTKNGHT} + \beta_3\texttt{SIZE}$ & 4 \\[2ex]
		\multirow{2}{*}{3 (COM-Poisson)} & \multirow{2}{*}{\texttt{NUMBIDS}} &  $\log(\mu_i) = \beta_0 + \beta_1\texttt{BIDPREM} +\beta_2\texttt{WHTKNGHT}$ &  \multirow{2}{*}{5}  \\
		&& $\log(\nu_i) = \rho_0 + \rho_1\texttt{SIZE}$&  \\[2ex]
		\multirow{2}{*}{4 (COM-Poisson)} & \multirow{2}{*}{\texttt{NUMBIDS}} &  $\log(\mu_i) = \beta_0 +\beta_2\texttt{WHTKNGHT}$ &  \multirow{2}{*}{4}  \\ 
		&& $\log(\nu_i) = \rho_0 + \rho_1\texttt{SIZE}$&  \\[2ex]
		\multirow{2}{*}{5 (COM-Poisson)} & \multirow{2}{*}{\texttt{NUMBIDS}} &  $\log(\mu_i) = \beta_0 + \beta_1\texttt{BIDPREM} +\beta_2\texttt{WHTKNGHT}$ &  \multirow{2}{*}{6}  \\ 
		&& $\log(\nu_i) = \rho_0 + \rho_1\texttt{SIZE} + \rho_2\texttt{FINREST}$&  \\\bottomrule       
	\end{tabular}
	\label{table:takeovermodels}
\end{table}

\begin{table}[htbp]
	\centering
	\caption{\textbf{Takeover bids:} Posterior parameter estimates are shown for all 5 models. The lowest BIC is shown in \textbf{bold}. The COM-Poisson regression models provide the best fit to the data according to BIC with all COM-Poisson models having a BIC lower than the Poisson models.}
	\begin{tabular}{@{\quad}cccccc@{\quad}} \toprule
		\multirow{2}{*}{\textbf{Parameter}} & \textbf{Model 1}                      & \textbf{Model 2}                  &\textbf{Model 3} &   \textbf{Model 4}  & \textbf{Model 5}                 \\
		&\textbf{Mean (SD)} & \textbf{Mean (SD)} & \textbf{Mean (SD)}  & \textbf{Mean (SD)} &  \textbf{Mean (SD)}\\  \midrule
		
		\multicolumn{1}{c}{$\hat{\beta_0}$} & \phantom{-}1.130 (0.505) & \phantom{-}1.063 (0.532) & \phantom{-}1.077 (0.384) & \phantom{-}0.329 (0.100) &  \phantom{-}0.354 (0.091) \\[0.5ex]
		
		\multicolumn{1}{c}{$\hat{\beta_1}$} & -0.728 (0.368) & -0.713 (0.382) & -0.553 (0.281) & - & - \\[0.5ex]
		
		\multicolumn{1}{c}{$\hat{\beta_2}$} & \phantom{-}0.583 (0.152) & \phantom{-}0.576 (0.152) & \phantom{-}0.458 (0.110) & \phantom{-}0.463 (0.111)  & \phantom{-}0.431 (0.103) \\[0.5ex]
		
		\multicolumn{1}{c}{$\hat{\beta_3}$} & - & \phantom{-}0.035 (0.017) & - & - & - \\[0.5ex]
		
		\multicolumn{1}{c}{$\hat{\rho_0}$} & - & -  & \phantom{-}0.674 (0.175) & \phantom{-}0.646 (0.175) & \phantom{-}0.789 (0.179)    \\[0.5ex]
		
		\multicolumn{1}{c}{$\hat{\rho_1}$} & - & -  & -0.171 (0.051) & -0.174 (0.052) & -0.176 (0.049)   \\[0.5ex] 
		
		\multicolumn{1}{c}{$\hat{\rho_2}$} & - & -  & - & - & -0.952 (0.448)  \\[0.5ex] \midrule
		
		\multicolumn{1}{c}{$\widehat{\textbf{BIC }} (r = 5000)$} & $397.49$ & $398.32$ & $386.89$ & $386.98$ & $\mathbf{386.40}$ \\[0.5ex]
		
		\multicolumn{1}{c}{\textbf{Rank}} & 4 & 5 (Worst) & 2 & 3 & \textbf{1 (Best)}   \\ \bottomrule                                      
	\end{tabular}
	\label{table:glmresults}
\end{table}

\subsubsection{Comparing sampler efficiency}
\label{sec:comparesampeff}
Following Section~\ref{sec:efficiency_comparison}, here we compare the efficiency of Algorithm~\ref{alg:exact_cmp_sampler} to the rejection sampler of \citet{chanialidis2017} in the context of the takeover bids dataset. To do so, we examine the number of MCMC draws per second resulting from Algorithm~\ref{alg:exact_cmp_sampler} and from the rejection sampler of \citet{chanialidis2017} for each COM-Poisson models (Models 3, 4 and 5) outlined in Table \ref{table:takeovermodels}. Our sampler provides a threefold increase in the number of MCMC draws per second in each of the COM-Poisson models agreeing with the experiment presented in Figure~\ref{fig:speed_up}. 

\begin{table}[htbp]
\centering
\caption{\textbf{{Algorithm efficiency comparison:}} This table shows the number of MCMC draws per second for our rejection sampler (Algorithm \ref{alg:exact_cmp_sampler}) compared with the rejection sampler of \citet{chanialidis2017}. Our algorithm is at least 3 times faster for all 3 models.}
\begin{tabular}{@{\quad}lccc@{\quad}} \toprule
 & \textbf{Model 3} & \textbf{Model 4} & \textbf{Model 5}  \\ \midrule
MCMC draws per sec. (Algorithm \ref{alg:exact_cmp_sampler}) & 9{,}973 & 12{,}534 & 9{,}297\\
MCMC draws per sec. (\citet{chanialidis2017}) & 2{,}866 & 3{,}585 & 2{,}828 \\ \midrule
\textbf{Relative efficiency} & 3.480 & 3.496 & 3.287 \\
 \bottomrule
 \label{table:samplercompare}
\end{tabular}
\end{table}

\subsection{Pseudo-marginal MCMC results}
We choose Model 5 in Table \ref{table:takeovermodels} to implement the GIMH pseudo-marginal MCMC algorithm using the unbiased likelihood estimator \eqref{eq:fulllikeest} in the acceptance ratio. The unbiased likelihood estimator was calculated for 5 scenarios with $r=1,5, 10, 50,100$. The posterior density estimates for each of the pseudo-marginal MCMC runs are shown in Table \ref{table:pmresults} along with the CPU time and multivariate effective sample size (mESS) \citep{vats2015}. The mESS is a generalisation of the effective sample size (ESS), which captures the cross-autocorrelation between parameters in the MCMC output.
\begin{equation*}
\text{mESS} = n_\text{mcmc} \left(\frac{\lvert \Lambda \rvert}{\lvert \Sigma \rvert}\right)^{1/p}
\end{equation*}
where $n_\text{mcmc}$ is the number of MCMC draws, $\Lambda$ is the sample autocovariance matrix of the MCMC output and $\Sigma$ is the true covariance of the posterior 
which can be estimated by a batch-means method described in \citet{vats2015}. Figure \ref{fig:pseudomcmc} compares the pseudo-marginal ($r=100$) posterior MCMC chain, 
density and autocorrelation with the exchange algorithm.
There is close agreement between the two methods, however the CPU time for the pseudo-marginal algorithm is dramatically increased.

\begin{table}[!htbp]
\centering
\caption{\textbf{Pseudo-marginal MCMC results:} The results for the pseudo-marginal MCMC algorithm run with the unbiased likelihood estimator.}
\begin{tabular}{@{\quad}c@{\,}cccccc@{\quad}} \toprule
\multirow{3}{*}{} & \multirow{1}{*}{\textbf{Exchange}}  & \multicolumn{5}{c}{\textbf{Pseudo-marginal MCMC - GIMH}} \\
 &  & {$r=1$} & {$r=5$}  & $r=10$ & $r=50$  & $r=100  $         \\

 & \textbf{Mean ($\sigma$)} & \textbf{Mean ($\sigma$)} & \textbf{Mean ($\sigma$)}  & \textbf{Mean ($\sigma$)} & \textbf{Mean ($\sigma$)}  & \textbf{Mean ($\sigma$)}\\  \midrule

\multicolumn{1}{c}{$\hat{\beta_0}$} & \phantom{-}0.354 (0.09)   & \phantom{-}0.337 (0.08) & \phantom{-}0.361 (0.09) & \phantom{-}0.354 (0.09) & \phantom{-}0.354 (0.09) & \phantom{-}0.356 (0.09) \\[0.5ex]


\multicolumn{1}{c}{$\hat{\beta_2}$} & \phantom{-}0.431 (0.10)   & \phantom{-}0.446 (0.10) & \phantom{-}0.424 (0.10) & \phantom{-}0.432 (0.10) & \phantom{-}0.433 (0.10) & \phantom{-}0.431 (0.11) \\[0.5ex]


\multicolumn{1}{c}{$\hat{\rho_0}$} & \phantom{-}0.789 (0.18)   & \phantom{-}0.785 (0.14) & \phantom{-}0.796 (0.17) & \phantom{-}0.790 (0.17) & \phantom{-}0.794 (0.18) & \phantom{-}0.793 (0.18) \\[0.5ex]

\multicolumn{1}{c}{$\hat{\rho_1}$} & -0.176 (0.05)   & -0.176 (0.05) & -0.172 (0.05) & -0.178 (0.05) & -0.175 (0.05) & -0.175 (0.05) \\[0.5ex]
\multicolumn{1}{c}{$\hat{\rho_2}$} & -0.952 (0.45)   & -0.817 (0.50) & -0.981 (0.44) & -0.944 (0.45) & -0.962 (0.45) & -0.955 (0.45)\\[0.5ex]

\midrule
\multicolumn{1}{c}{\textbf{CPU(s)}} & 10.617 & 46.257 & 188.050 & 355.001  & 1668.14 & 3512.57 \\
\multicolumn{1}{c}{\textbf{mESS}} & 4{,}962 & 345 & 904 & 2701  & 8{,}820 & 10{,}922 \\
     \bottomrule                                      
\end{tabular}
\label{table:pmresults}
\end{table}

\begin{figure}[htbp]
	\centering
	\includegraphics[width=\textwidth]{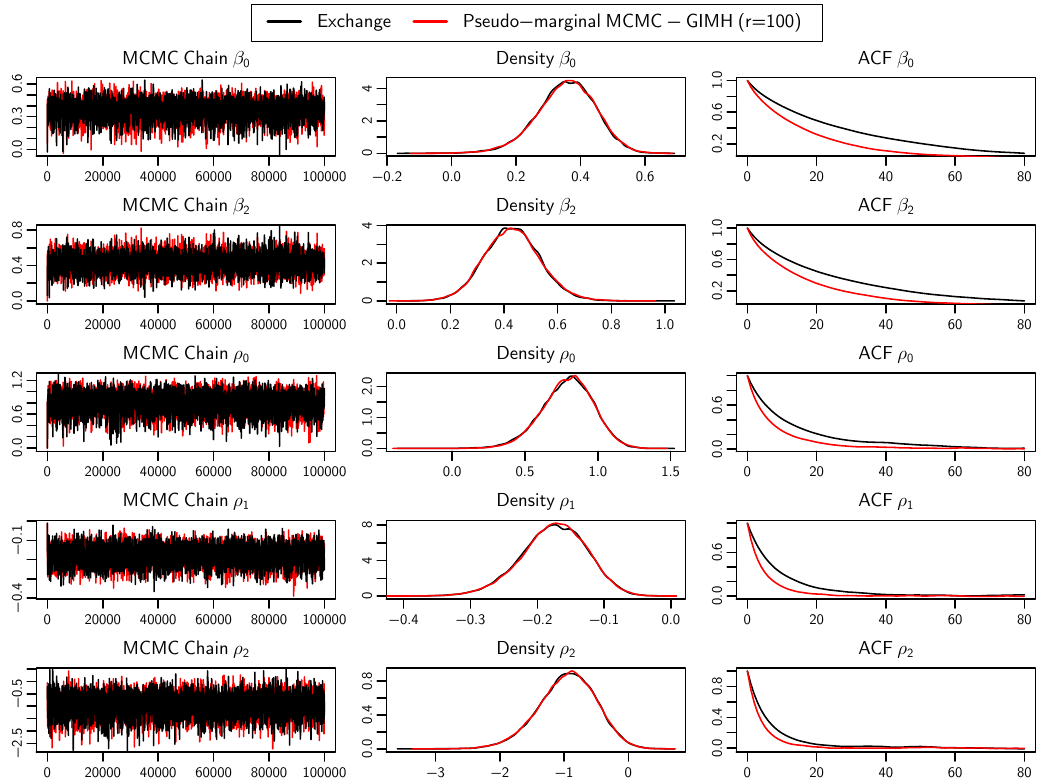}
	\caption{\textbf{Comparison of the pseudo-marginal MCMC algorithm and the exchange algorithm:} Both algorithms show agreement in posterior density estimates Pseudo-marginal MCMC shows a reduction in autocorrelation compared to the exchange algorithm for all parameters however this is at the expense of increased computational time shown in Table \ref{table:pmresults}.}
	\label{fig:pseudomcmc}
\end{figure}

\section{Discussion}
\label{sec:discussion}
This paper provides a new rejection sampler to sample from the COM-Poisson distribution.
This rejection sampler allows for faster parameter inference to be performed in COM-Poisson regression models compared with the sampler of \citet{chanialidis2017}.
Our rejection sampling algorithm shows that one must consider both the rejection rate of the enveloping distribution and the sampling time from the enveloping distribution in order to construct an efficient rejection sampler.
We have also shown how the number of rejected proposals within rejection sampling can be used to construct an unbiased intractable likelihood estimator. This estimator can be used to perform model selection using BIC but future work could include this estimator within other Bayesian model choice strategies. 
Model selection for doubly-intractable problems such as COM-Poisson regression is a difficult problem and our approach offers one solution to perform model selection. 
The unbiased likelihood estimator was shown to work well when used within a pseudo-marginal MCMC algorithm \citep{andrieu2009} as it provided an unbiased estimate of 
the acceptance ratio. This unbiased likelihood estimator could be constructed for other intractable models where a rejection sampling algorithm exists to sample from
the likelihood, presenting opportunities for future research.

\section*{Acknowledgements}
The Insight Centre for Data Analytics is supported by Science Foundation Ireland under Grant Number
12/RC/2289$\_$P2.

\clearpage
\bibliographystyle{chicago}
\bibliography{cmp_modelselection_arxiv.bib}
%

\clearpage
\section*{Appendices}
\begin{appendices}
	
	\section{COM-Poisson rejection sampler Proof}
	\label{sec:cmpproof}
	\begin{proof}
		
		There are two cases to consider depending on the value of $\nu$. \\\textit{(In this proof we use the Iverson bracket $[\cdot]$, where $[x] = 1$ if the enclosed statement $x$ is true and 0 if the statement $x$ is false.)} 	
		
		\noindent
		\textbf{Case 1}: When $[\nu \geq 1] = 1$
		
		\noindent
		The enveloping bound as illustrated in Figure~\ref{fig:envelopes} (left) is the ratio of the COM-Poisson mass function to a Poisson mass function
		\begin{equation*}
		B^{[\nu \geq 1]}_{f/g} = \sup_y \left \{ \dfrac{\left(\frac{\mu^y}{y!}\right)^\nu}{\frac{\mu^y}{y!}} \right \} = \sup_y \left \{  \left(\dfrac{\mu^y}{y!} \right)^{\nu-1} \right \}.
		\end{equation*}
		To find the supremum, assume that the supremum occurs at the point $y_m$ and by the unimodality of the COM-Poisson distribution, it will satisfy
		\begin{eqnarray}
		\left(\dfrac{\mu^{y_m}}{y_m!} \right)^{\nu-1} &\geq& \left(\dfrac{\mu^{y_m + h}}{(y_m + h)!} \right)^{\nu-1} \nonumber \\
		\iff \dfrac{\mu^{y_m}}{y_m!} 	&\geq& \dfrac{\mu^{y_m + h}}{(y_m + h)!},
		\label{eqn:inequalities1}
		\end{eqnarray}
		for $h \in \{-y_m, -y_m+1, \dots  \}$.
		To find the value of $y_m$ that will satisfy this collection of inequalities over $h$ it is enough to consider the dominant inequalities at $h = \pm 1$,
		\begin{align}
		\mu-1\leq y_m \leq \mu \qquad \Longrightarrow \, y_m = \lfloor \mu \rfloor.
		\label{eq:inequalitypois}
		\end{align}
		Therefore if $y_m$ satisfies \eqref{eq:inequalitypois}, it will also satisfy (\ref{eqn:inequalities1}) for all $h \in \{-y_m, -y_m+1, \dots  \}$.
		Thus the bound becomes
		\begin{align*}
		B^{[\nu \geq 1]}_{f/g} = \left(\frac{\mu^{\lfloor \mu \rfloor}}{\lfloor \mu \rfloor!}\right)^{\nu-1},
		\end{align*}
		which is a tractable bound and for the special case of Poisson ($\nu = 1$), then $B^{[\nu = 1]}_{f/g}=1$. In the case of integer $\mu$ this bound will also ensure coverage of the dual mode at $\mu$ and $\mu-1$ (see Figure~\ref{fig:envelopes}).
		\bigskip
		
		\noindent\textbf{Case 2}: When $[\nu < 1] = 1$
		
		\noindent
		The enveloping bound as illustrated in Figure~\ref{fig:envelopes} (right) is the ratio of the COM-Poisson mass function to a geometric mass function
		\begin{equation*}
		B^{[\nu < 1]}_{f/g} = \sup_y \left \{ \dfrac{\left(\frac{\mu^y}{y!}\right)^\nu}{(1-p)_{}	^{y}p} \right \} = \frac{1}{p}\sup_y \left \{ \dfrac{\left({\mu^\nu}\right)^y}{(1-p)_{}	^{y} y!^\nu} \right \}.
		\end{equation*}
		For now we assume $p$ is unknown and we prove the general bound for any $p$. A discussion of which value of $p$ to choose in practice	 will be given after this proof. As in \textbf{Case 1}, assume that 
		the supremum occurs at the point $y_m$, which by the unimodality of the COM-Poisson distribution, will satisfy
		\begin{eqnarray}
		\frac{1}{p} \dfrac{\left({\mu^\nu}\right)^{y_m}}{(1-p)_{}	^{y_m} y_m!^\nu} &\geq& \frac{1}{p} \dfrac{\left({\mu^\nu}\right)^{y_m+h}}{(1-p)_{}	^{y_m+h} (y_m+h	)!^\nu} \nonumber \\[2ex]
		\iff  \dfrac{1}{ y_m!} &\geq&  \dfrac{{\mu}^{h}}{(1-p)_{}	^{h/v} (y_m+h	)!},
		\label{eqn:inequalities2}
		\end{eqnarray}
		for $h \in \{-y_m, -y_m+1, \dots  \}$.
		To find the value of $y_m$ that will satisfy this collection of inequalities over $h$ it is enough to consider again only the inequalities at $h = \pm 1$. This leads to
		\begin{equation}
		\frac{\mu}{(1-p)^{{1}/{\nu}}} - 1 \leq y_m \leq \frac{\mu}{(1-p)^{{1}/{\nu}}} \qquad \Longrightarrow \, y_m = \left\lfloor \frac{\mu}{(1-p)^{{1}/{\nu}}} \right\rfloor.
		\label{eq:inequalitygeom}
		\end{equation}
		As for \textbf{Case 1}, if $y_m$ satisfies \eqref{eq:inequalitygeom} it will also satisfy (\ref{eqn:inequalities2}) for all $h \in \{-y_m, -y_m+1, \dots  \}$.
		Thus the bound becomes
		\begin{align*}
		B^{[\nu < 1]}_{f/g} = \frac{1}{p} \dfrac{{\mu}^{\left(\nu\left\lfloor \frac{\mu}{(1-p)^{{1}/{\nu}}} \right\rfloor\right)}}{(1-p)_{}	^{	\left(\left\lfloor \frac{\mu}{(1-p)^{{1}/{\nu}}} \right\rfloor\right)} \left(\left\lfloor \frac{\mu}{(1-p)^{{1}/{\nu}}} \right\rfloor!\right)	^\nu}.
		\end{align*}
	\end{proof}
	
	\section{Takeover bids data}
	\label{app:takeover}
	The variables available for the takeover bids dataset are show in Table~\ref{table:takeover} along with a brief description and possible values. This data was first analysed in \citet{jaggia1993}.
	\begin{table}[htb]
		\centering
					\caption{\textbf{Takeover bids:} Description of variables and possible values.}
		\begin{tabular}{@{}llll@{}}
			\toprule
			\textbf{Name} & \textbf{Description} &\textbf{Values} &\textbf{Type}  \\ \midrule
			\texttt{NUMBIDS}& Number of takeover bids received after initial bid &\texttt{0,1,2,\dots} &Response \\			
			\texttt{LEGLREST}&    Did management try legal defense by lawsuit? &\texttt{1=YES,0=NO} &Predictor \\
			\texttt{REALREST}&    Did management propose changes in asset structure? &\texttt{1=YES,0=NO} &Predictor \\
			\texttt{FINREST}&    Did management propose changes in ownership structure? &\texttt{1=YES,0=NO} &Predictor \\
			\texttt{WHTKNGHT}&    Did management invite friendly third-party bid? &\texttt{1=YES,0=NO} &Predictor \\
			\texttt{REGULATN}&    Did federal regulators intervene? &\texttt{1=YES,0=NO} &Predictor \\	
			\texttt{BIDPREM}&    Bid price divided by price 14 working days before bid. &\texttt{> 0} &Predictor \\
			\texttt{INSTHOLD}& Percentage of stock held by institutions. &\texttt{[0,1]} &Predictor \\
			\texttt{SIZE}& Total book value of assets &\texttt{USD billions} &Predictor \\	\bottomrule					
		\end{tabular}

			\label{table:takeover}
		\end{table}

\end{appendices}

\end{document}